\theoremstyle{plain}
\newtheorem{theorem}{Theorem}
\newtheorem*{theorem*}{Theorem}
\newtheorem{lemma}[theorem]{Lemma}
 \definecolor{BLACK}{gray}{0}
 \definecolor{WHITE}{gray}{1}
 \definecolor{RED}{rgb}{1,0,0}
 \definecolor{GREEN}{rgb}{0,1,0}
 \definecolor{BLUE}{rgb}{0,0,1}
 \definecolor{CYAN}{cmyk}{1,0,0,0}
 \definecolor{MAGENTA}{cmyk}{0,1,0,0}
 \definecolor{YELLOW}{cmyk}{0,0,1,0}
\def\identity{\leavevmode\hbox{\small1\kern-3.8pt\normalsize1}}
\newcommand{\ket}[1]{\left | #1 \right\rangle}
\newcommand{\bra}[1]{\left \langle #1 \right |}
\newcommand{\proj}[1]{\ket{#1}\bra{#1}}
\renewcommand{\epsilon}{\varepsilon}
\begin{document}

\title{Detecting non-decomposability of time evolution via extreme gain of correlations}

\author{Tanjung Krisnanda}
\affiliation{School of Physical and Mathematical Sciences, Nanyang Technological University, 637371 Singapore, Singapore}

\author{Ray Ganardi}
\affiliation{School of Physical and Mathematical Sciences, Nanyang Technological University, 637371 Singapore, Singapore}

\author{Su-Yong Lee}
\affiliation{School of Computational Sciences, Korea Institute for Advanced Study, Hoegi-ro 85, Dongdaemun-gu, Seoul 02455, Korea}

\author{Jaewan Kim}
\affiliation{School of Computational Sciences, Korea Institute for Advanced Study, Hoegi-ro 85, Dongdaemun-gu, Seoul 02455, Korea}

\author{Tomasz Paterek}
\affiliation{School of Physical and Mathematical Sciences, Nanyang Technological University, 637371 Singapore, Singapore}
\affiliation{MajuLab, CNRS-UCA-SU-NUS-NTU International Joint Research Unit, UMI 3654 Singapore, Singapore}

\begin{abstract}
Non-commutativity is one of the most elementary non-classical features of quantum observables.
Here we propose a method to detect non-commutativity of interaction Hamiltonians of two probe objects coupled via a mediator.
If these objects are open to their local environments, our method reveals non-decomposability of temporal evolution into a sequence of interactions between each probe and the mediator. 
The Hamiltonians or Lindblad operators can remain unknown throughout the assessment, we only require knowledge of the dimension of the mediator.
Furthermore, no operations on the mediator are necessary.
Technically, under the assumption of decomposable evolution, we derive upper bounds on correlations between the probes and
then demonstrate that these bounds can be violated with correlation dynamics generated by non-commuting Hamiltonians, e.g., Jaynes-Cummings coupling.
An intuitive explanation is provided in terms of multiple exchanges of a virtual particle which lead to the excessive accumulation of correlations.
A plethora of correlation quantifiers are helpful in our method, e.g., quantum entanglement, discord, mutual information, and even classical correlation.
Finally, we discuss exemplary applications of the method in quantum information: the distribution of correlations and witnessing dimension of an object.
\end{abstract}

\maketitle

\section{Introduction}

All classical observables are functions of positions and momenta.
Since there is no fundamental limit on the precision of position and momentum measurement in classical physics,
all classical observables are, in principle, measurable simultaneously.
Quite differently, the Heisenberg uncertainty principle forbids simultaneous exact knowledge of quantum observables corresponding to position and momentum.
The underlying non-classical feature is their non-commutativity:
Any pair of non-commuting observables cannot be simultaneously measured to arbitrary precision, as first demonstrated by Robertson in his famous uncertainty relation~\cite{robertson1929}.
Other examples of non-classical phenomena with underlying non-commutativity of observables include violations of Bell inequalities~\cite{landau1987,peres-book} or, more generally, non-contextual inequalities; e.g., see~\cite{thompson2016}.
Here we describe a method to detect non-commutativity of interaction Hamiltonians, and generally non-decomposability of temporal evolution, from the dynamics of correlations.

Consider the situation depicted in Fig.~\ref{FIG_setup}, where the probe systems $A$ and $B$ do not interact directly but only via the mediator $C$; i.e., there is no Hamiltonian term $H_{AB}$.
In general, we allow all objects to be open systems and study whether the evolution operator cannot be represented by a sequence of operations between each probe and the mediator, i.e., $\Lambda_{BC}\Lambda_{AC}$ or in reverse order.
For the special case in which all systems are closed, non-decomposability implies non-commutativity of interaction Hamiltonians, i.e., $[H_{AC},H_{BC}]\ne0$.
Indeed, for commuting Hamiltonians, the unitary evolution operator is decomposable into $U_{BC}U_{AC}$, where, for example, $U_{AC} = \exp(- i t H_{AC})$ and we set $\hbar = 1$.
We show that for decomposable evolution, correlations between $A$ and $B$ are bounded. 
We also show with concrete dynamics generated by non-commuting Hamiltonians that these bounds can be violated.
The bounds derived depend solely on the dimensionality of $C$ and not on the actual form of the evolution operators.
Hence, these operators can remain unknown throughout the assessment.
This is a desired feature, as experimenters usually do not reconstruct the evolution operators via process tomography.
It also allows applications of the method to situations where the physics is not understood to the extent that reasonable Hamiltonians or Lindblad operators can be written down.
Furthermore, the assessment does not depend on the initial state of the tripartite system and does not require any operations on the mediator.
It is therefore applicable to a variety of experimental situations; Refs.~\cite{rauschenbeutel2001,sahling2015,baart2017,hamsen} provide concrete examples.

\begin{figure}[!h]
\includegraphics[width=0.4\textwidth]{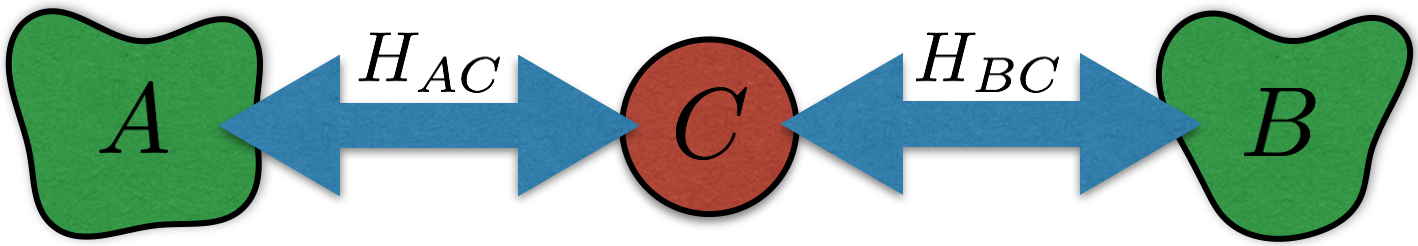}
\caption{Probe objects $A$ and $B$ individually interact with a mediator $C$, but not with each other ($A$, $B$, and $C$ could be open to their local environments). 
The coherent parts of the interactions are described by Hamiltonians $H_{AC}$ and $H_{BC}$. We show how to infer non-decomposability of the temporal evolution based on a gain of correlation between $A$ and $B$ exceeding a certain threshold, which is a function of the dimensionality of $C$ only.}
\label{FIG_setup}
\end{figure}

We begin by presenting the general bounds on the amount of correlations one can establish if the evolution is decomposable.
It is shown that these bounds are generic and hold for a plethora of correlation quantifiers.
We then calculate concrete bounds on exemplary quantifiers and show how they can be violated in a system of two fields coupled by a two-level atom.
We discuss the origin of the violation in terms of ``Trotterized" evolution, where a virtual particle is exchanged between $A$ and $B$ multiple times if the Hamiltonians do not commute but only once if they do commute.
Finally, we focus on immediate applications in quantum information and discuss the consequences of our findings for correlation distribution protocols and dimension witnesses.


\section{Results}

\subsection{General bounds}

Consider the setup illustrated in Fig. \ref{FIG_setup}.
System $C$, with finite dimension $d_C$, is mediating interactions between higher-dimensional systems $A$ and $B$.
For simplicity we take $d_A=d_B > d_C$.
We assume that there is no direct interaction between $A$ and $B$, such that the Hamiltonian of the whole tripartite system is of the form $H_{AC}+H_{BC}$ (local Hamiltonians $H_A$, $H_B$, and $H_C$ included).
Our bounds follow from a generalization of the following simple observation.
Consider, for the moment, the relative entropy of entanglement as the correlation quantifier \cite{vedral1997}.
If the evolution is decomposable, it can be written as $\Lambda_{BC} \Lambda_{AC}$, or in reverse order.
Therefore, it is as if particle $C$ interacted first with $A$ and then with $B$, a scenario similar to that in Refs.~\cite{cubitt2003,streltsov2012,chuan2012,edssexp1,edssexp2,edssexp3}.
The first interaction can generate at most $\log_2(d_C)$ ebits of entanglement, whereas the second, in the best case, can swap all this entanglement. 
In the end, particles $A$ and $B$ gain at most $\log_2(d_C)$ ebits.
The bound is indeed independent of the form of interactions. 
Furthermore, it is intuitively clear, as this is just the ``quantum capacity'' of the mediator. 

Now let us consider correlation quantifiers obtained in the so-called ``distance" approach~\cite{vedral1997, modi2010}.
The idea is to quantify correlation $Q_{X:Y}$ in a state $\rho_{XY}$ as the shortest distance $D(\rho_{XY},\sigma_{XY})$ from $\rho_{XY}$ to a set of states $\sigma_{XY} \in \mathcal{S}$ without the desired correlation property,
i.e., $Q_{X:Y} \equiv \inf_{\sigma_{XY}\in \mathcal{S}} D(\rho_{XY},\sigma_{XY})$.
For example, the relative entropy of entanglement is given by the relative entropy of a state to the set of disentangled states~\cite{vedral1997}.
It turns out that most such quantifiers are useful for the task introduced here.
The conditions we require are that (i) $\mathcal{S}$ is closed under local operations $\Lambda_Y$ on $Y$,
(ii) $D(\Lambda[\rho],\Lambda[\sigma]) \leq D(\rho, \sigma)$ (monotonicity), and (iii) $D(\rho_0, \rho_1) \leq D(\rho_0, \rho_2) + D(\rho_2, \rho_1)$ (triangle inequality).
They are sufficient to prove the following theorem.
\begin{theorem}\label{TH_cons}
Suppose a correlation $Q_{X:Y}$ satisfies properties {\rm(i)--(iii)} listed above.
If the evolution operator $\Lambda_{ABC}$ is decomposable into $\Lambda_{BC}\Lambda_{AC}$, then
\begin{equation}
	Q_{A:B} (t) \leq I_{AC:B}(0) + \sup_{\ket{\psi}} Q_{A:C},
	\label{TH_IQ_BOUND}
\end{equation}
where $I_{AC:B}(0) = \inf_{\sigma_{AC} \otimes \sigma_B} D(\rho, \sigma_{AC} \otimes \sigma_B)$, $\rho$ is the initial tripartite state, and the supremum of $Q_{A:C}$ is taken over pure states of $AC$.
\end{theorem}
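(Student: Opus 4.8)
The plan is to track how the two correlation quantities of interest---$Q_{A:B}(t)$ on the final state and the tripartite correlation $I_{AC:B}$ across the bipartition $AC$ versus $B$---change under the two stages of the decomposable evolution. Write $\rho$ for the initial tripartite state, $\rho' = \Lambda_{AC}[\rho]$ for the state after the first interaction, and $\rho'' = \Lambda_{BC}[\rho']$ for the final state, so that $\rho_{AB}(t)$ is the $AB$-marginal of $\rho''$. I would first argue that the quantity $I_{AC:B}$ does not increase under $\Lambda_{AC}$: this map is local with respect to the $AC$ versus $B$ cut, so by condition (i) the optimal product state $\sigma_{AC}\otimes\sigma_B$ for $\rho$ gets mapped to a state of the same product form, and by monotonicity (ii) the distance can only shrink. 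Hence $I_{AC:B}(\rho') \le I_{AC:B}(0)$.

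Next I would bound $Q_{A:B}$ at the end by something evaluated on $\rho'$. The key move is to insert a well-chosen reference state into the triangle inequality (iii). On the state $\rho'$, consider the optimal product state realizing $I_{AC:B}(\rho')$, call it $\tau_{AC}\otimes\tau_B$; its $AB$-marginal is $(\tau_{AC})_A \otimes \tau_B$, a product state across $A:B$, hence an element of $\mathcal{S}$ for the $A:B$ correlation. Applying $\Lambda_{BC}$ and then tracing out $C$ is a $B$-local-plus-discarding operation from the point of view of the $A:C$ structure, but more to the point it is just some channel; I would use monotonicity to relate $Q_{A:B}(\rho'')$ back to the distance computed before $\Lambda_{BC}$ is applied, while being careful that $\Lambda_{BC}$ can create $A\!:\!B$ correlations only through $C$. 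Concretely, I expect the chain to run: $Q_{A:B}(\rho'') \le D\big(\rho_{AB}'', (\text{image of } (\tau_{AC})_A\otimes\tau_B)\big) \le D\big(\rho_{AC}', \tau_{AC}\otimes(\text{something})\big) + (\text{a } Q_{A:C}\text{ term})$, where the triangle inequality splits the job into ``how far $\rho'$ is from being $AC\!:\!B$ product'' plus ``how correlated $A$ and $C$ are inside $\rho'$.''

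More carefully, the cleanest route is: by monotonicity under $\Lambda_{BC}$ (which is identity on $A$ and cannot increase an $A:B$-type distance when the target is $\Lambda_{BC}$ applied to a state of the form $\xi_{AC}\otimes\xi_B$ with $\xi_{AC}$ pure), bound $Q_{A:B}(t)$ by $D(\rho', \xi_{AC}\otimes\xi_B)$ for the best such choice; then apply the triangle inequality with the intermediate point $\tau_{AC}\otimes\tau_B$ (the $I_{AC:B}$-optimizer of $\rho'$) to get $D(\rho',\tau_{AC}\otimes\tau_B) + D(\tau_{AC}\otimes\tau_B, \xi_{AC}\otimes\xi_B)$. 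The first term is exactly $I_{AC:B}(\rho') \le I_{AC:B}(0)$. For the second term, choose $\xi_B = \tau_B$ so the $B$-parts match and the distance reduces (by monotonicity under partial trace over $B$, using that these factors are identical) to $D(\tau_{AC}, \xi_{AC})$; optimizing $\xi_{AC}$ over pure states and recognizing that $\tau_{AC}$ is an arbitrary (reduced) state of $AC$, this is at most $\sup_{\ket{\psi}} Q_{A:C}$ evaluated over pure states $\ket{\psi}_{AC}$. Summing gives \eqref{TH_IQ_BOUND}.

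The main obstacle I anticipate is the bookkeeping in the second stage: making precise the sense in which $\Lambda_{BC}$, acting on a state that is already $AC\!:\!B$ product, cannot move $\rho_{AB}$ far from the set $\mathcal{S}$ of $A\!:\!B$-uncorrelated states beyond what the internal $A\!:\!C$ correlation allows. One must choose the comparison state $\xi_{AC}$ so that $\Lambda_{BC}[\xi_{AC}\otimes\tau_B]$ has a product $A\!:\!B$ marginal---purity of $\xi_{AC}$ together with closure of $\mathcal{S}$ under the $B$-local channel $\Lambda_{BC}$ (condition (i), now read on the $A\!:\!B$ cut, since everything touching $B$ is local there) is what guarantees this. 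The rest---three applications of monotonicity and one of the triangle inequality---is routine once the reference states are named, so the proof is essentially a matter of threading properties (i)--(iii) through the two-step structure in the right order.
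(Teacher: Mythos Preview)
The overall strategy is sound and close to the paper's, but there is a genuine error in how you deploy the reference state $\xi_{AC}$. You claim that \emph{purity} of $\xi_{AC}$ guarantees that $\mathrm{Tr}_C\Lambda_{BC}[\xi_{AC}\otimes\tau_B]$ lies in $\mathcal{S}_{A:B}$. It does not: take $\xi_{AC}$ to be a maximally entangled pure state and $\Lambda_{BC}$ the swap; the resulting $A\!:\!B$ marginal is maximally entangled, certainly not in $\mathcal{S}_{A:B}$. Equally, the second half of your split---``$\inf_{\xi_{AC}\ \text{pure}} D(\tau_{AC},\xi_{AC})\le \sup_{|\psi\rangle}Q_{A:C}$''---is false in general (for the relative entropy the left-hand side is $+\infty$ whenever $\tau_{AC}$ is mixed). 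You are conflating two different roles of ``pure'': the supremum $\sup_{|\psi\rangle}Q_{A:C}$ runs over pure \emph{inputs} to $Q$, whereas the target of the distance defining $Q$ must lie in $\mathcal{S}$.

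The fix is to take $\xi_{AC}\in\mathcal{S}_{A:C}$ rather than pure. Then the channel $\rho_{AC}\mapsto\mathrm{Tr}_C\Lambda_{BC}[\rho_{AC}\otimes\tau_B]$ acts as the identity on $A$ and as a local operation on the $Y$ side, so by condition (i) it sends $\mathcal{S}_{A:C}$ into $\mathcal{S}_{A:B}$; your first inequality $Q_{A:B}(t)\le D(\rho',\xi_{AC}\otimes\tau_B)$ then holds by monotonicity (ii). Choosing $\xi_{AC}$ as the optimiser for $Q_{A:C}(\tau_{AC})$ gives $D(\tau_{AC},\xi_{AC})=Q_{A:C}(\tau_{AC})\le\sup_{|\psi\rangle}Q_{A:C}$, where the last step uses that a correlation measure monotone under local operations on one side attains its maximum on pure states. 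With this correction your argument is essentially equivalent to the paper's: both hinge on one application of the triangle inequality, several of monotonicity, and invariance of $Q$ under removing an uncorrelated ancilla. The paper simply orders the steps differently, first passing to $Q_{A:BC}(\Lambda_{AC}[\rho])$ and inserting the intermediate point $\Lambda_{AC}[\sigma^0_{AC}]\otimes\sigma^0_B$ there.
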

\begin{proof}
The proof is given in Appendix A.
\end{proof}
Note that although the relative entropy does not satisfy (iii) it still follows Theorem \ref{TH_cons} (cf. Lemma \ref{LM_re} in Appendix A). 
Correlations between probe $A$ and probe $B$ are therefore bounded by the maximal achievable correlation with the mediator, $\sup_{\ket{\psi}} Q_{A:C}$.
The additional term $I_{AC:B}(0)$ reduces to the usual mutual information if $D(\rho_{XY},\sigma_{XY})$ is the relative entropy distance~\cite{modi2010} 
and characterizes the amount of total initial correlations between one of the probes and the rest of the system.
Note that the bound is independent of time.
This can be seen as a result of the effective description of such dynamics given by $\Lambda_{BC}\Lambda_{AC}$.
The particle $C$ is exchanged between $A$ and $B$ only once, independently of the duration of the dynamics. 

In a typical experimental situation the initial state can be prepared as completely uncorrelated $\rho = \rho_A \otimes \rho_B \otimes \rho_C$,
in which case Theorem~\ref{TH_cons} simplifies and the bound is given solely in terms of the ``correlation capacity'' of the mediator:
\begin{equation}
Q_{A:B} (t) \le \sup_{\ket{\psi}} Q_{A:C}.
\label{EQ_CC}
\end{equation}
Clearly, the same bound holds for initial states of the form $\rho = \rho_{AC} \otimes \rho_B$.
In Appendix B we show that, with this initial state, Eq.~(\ref{EQ_CC}) holds for any correlation quantifier that is monotonic under local operations $\Lambda_{BC}$, 
not necessarily based on the distance approach, e.g., any entanglement monotone.

For initial states that are close to $\rho = \rho_{AC} \otimes \rho_B$ one can utilize the continuity of the von Neumann entropy \cite{fannes1973continuity} and see that $I_{AC:B}(0)$ in Eq. (\ref{TH_IQ_BOUND}) is indeed small. We can also ensure that the initial state is of the form $\rho = \rho_{AC} \otimes \rho_B$  by performing a correlation breaking channel on $B$ first.
One example of such a channel is a measurement in the computational basis followed by a measurement in some complementary (say Fourier) basis.
This implements the correlation breaking channel $(\identity_{AC} \otimes \Lambda_B) (\rho_{ABC}) = \rho_{AC} \otimes \frac{\identity}{d_B}$~\cite{footnote}.
In this way, our method does not require any knowledge of the initial state and any operations on the mediator, similar in spirit to the detection of quantum discord of inaccessible objects in Ref.~\cite{krisnanda2017}.
We now move to concrete correlation quantifiers and their correlation capacities.


\subsection{Exemplary measures and bounds}

We provide four correlation quantifiers which capture different types of correlations between quantum particles.
All of them are shown to be useful in detecting non-decomposability.

Mutual information is a measure of total correlations~\cite{groisman2005} and is defined as $I_{X:Y} = S_X + S_Y - S_{XY}$, where, e.g., $S_X$ is the von Neumann entropy of subsystem $X$.
It can also be seen as a distance-based measure with the relative entropy as the distance and a set of product states $\sigma_X \otimes \sigma_Y$ as $\mathcal{S}$~\cite{modi2010}.
The supremum in Eq. (\ref{EQ_CC}) is attained by the state (recall that $d_A > d_C$),
\begin{equation}
|\Psi \rangle = \frac{1}{\sqrt{d_C}} \sum_{j = 1}^{d_C} |a_j \rangle |c_j \rangle,
\label{EQ_MAX_ENT}
\end{equation}
where $| a_j \rangle$ and $| c_j \rangle$ form orthonormal bases. 
One finds $\sup_{\ket{\psi}} I_{A:C} = 2 \log_2(d_C)$.

An interesting quantifier in the context of non-classicality detection is the classical correlation in a quantum state.
It is defined as mutual information of the classical state obtained by performing the best local von Neumann measurements on the original state $\rho$~\cite{terhal2002},
i.e., $C_{X:Y} = \sup_{\Pi_X \otimes \Pi_Y} I_{X:Y}(\Pi_X \otimes \Pi_Y(\rho) )$, where $\Pi_X \otimes \Pi_Y(\rho) = \sum_{xy} \proj{xy} \rho \proj{xy}$, and $\ket{x}$, $\ket{y}$ form orthonormal bases.
The supremum of mutual information over classical states of $AC$ is $\log_2(d_C)$.

Quantum discord is a form of purely quantum correlations that contain quantum entanglement.
It can be phrased as a distance-based measure. 
In particular, we consider the relative entropy of discord~\cite{modi2010}, also known as the one-way deficit~\cite{horodecki2005}.
It is an asymmetric quantity defined as $\Delta_{X|Y} = \inf_{\Pi_Y} S(\Pi_Y(\rho)) - S(\rho)$, where $\Pi_Y$ is a von Neumann measurement conducted on subsystem $Y$.
The relative entropy of discord is maximized by the state (\ref{EQ_MAX_ENT}), for which we have $\sup_{\ket{\psi}} \Delta_{A|C} = \log_2(d_C)$.

Our last example is negativity, a computable entanglement monotone~\cite{vidal2002}.
For a bipartite system negativity is defined as $N_{X:Y} = (||\rho^{T_X}||_1-1)/2$,
where $||.||_1$ denotes the trace norm and $\rho^{T_X}$ is a matrix obtained by partial transposition of $\rho$ with respect to $X$.
Negativity is maximized by the state (\ref{EQ_MAX_ENT}), and the supremum reads $\sup_{\ket{\psi}} N_{A:C} = (d_C-1)/2$.

Clearly, many other correlation quantifiers are suitable for our detection method because the assumptions behind Eqs.~(\ref{TH_IQ_BOUND}) and (\ref{EQ_CC}) are not demanding.
In fact, one may wonder which correlations do not qualify for our method.
A concrete example is the geometric quantum discord based on $p$-Schatten norms with $p > 1$, as it may increase under local operations on $BC$~\cite{piani2012,paula2013}.


\subsection{Violations}

We now demonstrate, with concrete dynamics generated by non-commuting Hamiltonians, that the bounds derived can be violated.
We next discuss the origin of this violation.

Consider a two-level atom $C$, i.e., $d_C=2$, mediating interactions between two cavity fields $A$ and $B$. 
A similar scenario has been considered and implemented, for example, in Refs. \cite{rauschenbeutel2001,messina2002,browne2003,hamsen}.
The interaction between the atom and each cavity field is taken to follow the Jaynes-Cummings model,
\begin{equation}
H = g (\hat a\hat \sigma_+ +\hat a^{\dagger}\hat \sigma_-)+g (\hat b\hat \sigma_+ +\hat b^{\dagger}\hat \sigma_-),
\label{EQ_JC}
\end{equation}
where $\hat a$ ($\hat b$) is the annihilation operator of field $A$ ($B$), while $\hat \sigma_+$ ($\hat \sigma_-$) is the raising (lowering) operator of the two-level atom.
For simplicity, we have assumed that the interaction strengths between the two-level atom and the fields are the same.
Note that $H$ is of the form $H_{AC}+H_{BC}$ with non-commuting components.

The resulting correlation dynamics are plotted in Fig. \ref{FIG_dynamics}. 
Mutual information and negativity were calculated directly, whereas for the classical correlation and the relative entropy of discord, we provide the lower bounds $\tilde C_{A:B}$ and $-S_{A|B}$, respectively.
$\tilde C_{A:B}$ is calculated as the mutual information of the state resulting from projective local measurements in the Fock basis (no optimization over measurements performed).
The negative conditional entropy $-S_{A|B}$ is a lower bound on the distillable entanglement \cite{devetak2005distillation}, which in turn is a lower bound on the relative entropy of entanglement $E_{A:B}$ \cite{horodecki2000limits}. Therefore, we note the chain of inequalities $-S_{A|B} \le E_{A:B} \le \Delta_{A|B} \le I_{A:B}$, where the last two inequalities follow from \cite{modi2010}.
Already these lower bounds can beat the limit set by decomposable evolution, and therefore, all mentioned correlations can detect non-decomposability of the evolution. 
Since we consider closed systems, this infers non-commutativity of the Jaynes-Cummings couplings.
We also note another non-classical feature of the studied dynamics: since Fig. \ref{FIG_dynamics} shows entanglement gain, according to Ref.~\cite{krisnanda2017} there must be quantum discord $D_{AB|C}$ during the evolution.

\begin{figure}[!h]
\includegraphics[scale=0.28]{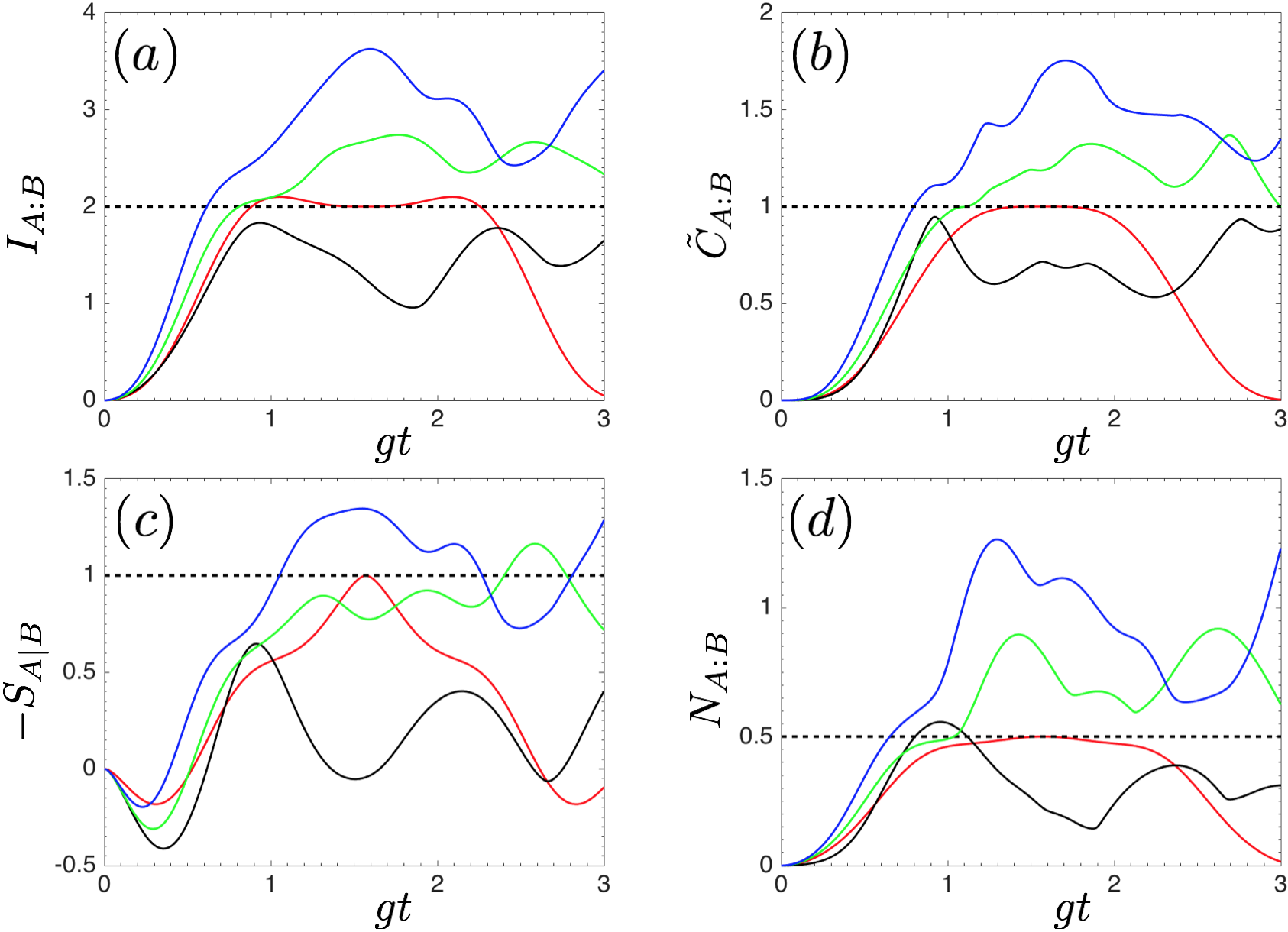}
\caption{Correlation dynamics with the Jaynes-Cummings model (solid curves) and the corresponding bounds for decomposable evolution (dashed lines).
(a) Mutual information, (b) lower bound on the classical correlation (see the main text), (c) lower bound on the relative entropy of discord, and (d) negativity.
In all cases, time is rescaled with the interaction strength $g$ and the initial state of $ABC$ is varied: $\ket{110}$ (red curves), $\ket{101}$ (black curves), $\ket{210}$ (green curves), and $\ket{220}$ (blue curves).}
\label{FIG_dynamics}
\end{figure}

It is apparent that the detection is easier (faster and with more pronounced violation) with a higher number of photons in the initial states of the cavity fields.
We offer an intuitive explanation.
Consider, for example, $| m n 0 \rangle$ as the initial state of $ABC$.
By defining $\hat \xi = (\hat a +\hat b )/\sqrt{2}$, the Hamiltonian of Eq. (\ref{EQ_JC}) becomes $\sqrt{2} g(\hat \xi \hat \sigma_+ + \hat \xi^{\dagger}\hat \sigma_-)$ and it is straightforward to obtain the unitary evolution~\cite{scully-book}.
One finds that the quantum state of the fields oscillates incoherently between $\sum^{m+n}_{j=0}c_j(t) |j\rangle_A |m+n - j\rangle_B$ and $\sum^{m+n-1}_{j=0}d_j(t) |j\rangle_A |m+n-1-j\rangle_B$.
Both of these states are superpositions of essentially $m+n$ bi-orthogonal terms giving rise to high entanglement and, therefore, also other forms of correlations.

Figure~\ref{FIG_dynamics} illustrates that different correlation quantifiers have different detection capabilities and it is not clear at this stage whether there is a universal measure with which non-commutativity is detected, e.g., the fastest.
For most initial states we studied mutual information detected non-commutativity the most rapidly, but there are exceptions, as shown by the black curve corresponding to the initial state $| 1 0 1\rangle$.
With this initial state the mutual information never violates its bound, but the negativity does.


\section{Discussion}

Let us present the origin of the violation just observed.
Since the total Hamiltonian is of the form $H_{AC} + H_{BC}$, the Suzuki-Trotter expansion of the resulting evolution is particularly illuminating,
\begin{equation}\label{EQ_trotterexp}
e^{i t (H_{AC} + H_{BC})} = \lim_{n \to \infty} \left( e^{i\Delta t H_{BC}} e^{i\Delta t H_{AC}} \right)^n,
\end{equation}
where $\Delta t = t / n$.
If Hamiltonians do not commute, it is necessary to think about Eq. (\ref{EQ_trotterexp}) as $n$ sequences of pairwise interactions of $C$ with $A$ followed by $C$ with $B$, each for a time $\Delta t$.
Each pair of interactions can only increase correlations up to the correlation capacity of the mediator, but their multiple use allows the accumulation of correlations beyond what is possible with commuting Hamiltonians.
Recall that, in the latter case, we deal with only one exchange of system $C$, independently of the duration of dynamics.
We stress that Trotterization is just a mathematical tool and in the laboratory system $C$ is continuously coupled to $A$ and $B$.
It is rather as if a virtual particle $C$ were transmitted multiple times between $A$ and $B$, interacting with each of them for a time $\Delta t$.

Our results imply that the non-commutativity (non-decomposability in general) is a desired feature of interactions in the task of correlation distribution, which is important for quantum information applications.
As a contrasting physical illustration, we consider the strong dipole-dipole interactions in our field-atom-field example.
The Hamiltonian reads
\begin{equation}
H^{\prime}=g (\hat a+\hat a^{\dagger}) (\hat \sigma_+ + \hat \sigma_-)+g (\hat b+\hat b^{\dagger})(\hat \sigma_+ + \hat \sigma_-),
\label{EQ_DD}
\end{equation}
with commuting components, i.e., $[H_{AC},H_{BC}]=0$.
One can verify that the results of this model are in agreement with all the bounds we derived.
Furthermore, we prove in Appendix C that, with this coupling, the state of $AB$ at time $t$ is effectively given by a two-qubit separable state. This makes $N_{A:B}(t)=0$ and $I_{A:B}(t)\le 1$. 
Note the counter-intuitive result that strong interactions produce bounded correlations between the probes, while weak interactions (Jaynes-Cummings coupling) can increase the correlations above the bounds. 

We also note an application of our bounds to estimate the dimension of the mediator; see, e.g., Refs.~\cite{dimwit1,dimwit2,dimwit3} for other dimension witnesses.
For decomposable evolution (including discrete sequential operators considered in Refs.~\cite{cubitt2003,streltsov2012,chuan2012,edssexp1,edssexp2,edssexp3}), the amount of correlation between the probes is bounded by the correlation capacity $\sup_{\ket{\psi}} Q_{A:C}$, which is a function of $d_C$.
If one observes a $Q_{A:B}(t)$ value that is larger than the correlation capacity of a certain $d_C$, then the dimension of the mediator must be larger than $d_C$.

Finally, we wish to discuss a scenario where the three systems are open to their own \emph{local} environments, as realized, e.g., in~\cite{hamsen}.
We take the evolution following the master equation in Lindblad form,
\begin{eqnarray}
\dot \rho & = & -i[H_{AC}+H_{BC},\rho]+\sum_{X=A,B,C}L_X\rho, \label{EQ_open}\\
L_X\rho & = & \sum_k Q^X_k\rho Q^{X\dag}_k-\frac{1}{2}\{Q^{X\dag}_kQ^X_k,\rho\}, \nonumber
\end{eqnarray}
where the last term in (\ref{EQ_open}) is the incoherent part of the evolution and $L_X$ describes interactions of system $X$ with its local environment, i.e. the operators $Q^X_k$ act on system $X$ only.
We denote $\mathcal{L}_{AC}=-i[H_{AC},\cdot]+L_A+L_C$ and $\mathcal{L}_{BC}=-i[H_{BC},\cdot]+L_B$. 
One readily verifies that if $[H_{AC},H_{BC}]=0$ and $[L_C,H_{BC}]=0$, we have commuting Lindblad operators $[\mathcal{L}_{AC},\mathcal{L}_{BC}]=0$. 
Note that, if one includes $L_C$ in $\mathcal{L}_{BC}$ instead, the second condition for commuting Lindblad operators now reads $[L_C,H_{AC}]=0$.
For commuting Lindbladians, the corresponding evolution decomposes as $\Lambda_{BC} \Lambda_{AC}$, or in reverse order. 
Therefore, our bounds apply accordingly. 
Their violation implies that either the Hamiltonians do not commute or the operators describing dissipative channels on $C$ do not commute with $H_{AC}$ and $H_{BC}$.
In particular, if $C$ is kept isolated so that its noise can be ignored, the violation of our bounds is solely the result of the non-commutativity of the Hamiltonians.


\section{Conclusions}

We linked non-commutativity of interaction Hamiltonians (non-decomposability of time evolution in general) to the amount of correlations that can be created in the associated dynamics.
This led us to a method for detection of non-decomposability of evolution in a scenario where subsystem $C$ mediates interactions between $A$ and $B$ (all these objects can interact with their local environments).
The method requires no explicit form of the evolution operators or knowledge of the initial state of the tripartite system.
Non-decomposability is detected by observing violation of certain bounds on $AB$ correlations, as measured by most correlation quantifiers.
Furthermore, no operation on $C$ is necessary at any time, which makes this strategy experimentally friendly.
In particular, in addition to avoiding characterization of the interactions, the physics of $C$ can remain largely unknown---only its dimension should be identified.\\

\section*{Acknowledgments}

We thank Alexander Streltsov and Kavan Modi for insightful discussions, and Matthew Lake for comments on the manuscript. 
S.-Y. L. would like to thank Changsuk Noh for useful comments.
T. K. and T. P. thank Wies{\l}aw Laskowski for hospitality at the University of Gda{\'n}sk.
This work is supported by Singapore Ministry of Education Academic Research Fund Tier 2 Project No. MOE2015-T2-2-034. 

\appendix

\section{$\mbox{Proof of Theorem 1}$}
\label{APP_TH_BOUND}

For completeness let us begin with a useful lemma.

\setcounter{theorem}{0}
\begin{lemma}\label{LM_pre}
For a measure of correlations $Q_{X:Y}$ between party $X$ and party $Y$ that is non-increasing under local operations on $Y$, the following property holds: $Q_{X:Y}$ is invariant under tracing-out of uncorrelated systems on the side of $Y$.
\end{lemma}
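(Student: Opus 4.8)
The plan is to establish the invariance by proving both inequalities $Q_{X:YY'}\le Q_{X:Y}$ and $Q_{X:Y}\le Q_{X:YY'}$, in each case by a single application of the monotonicity hypothesis to a suitably chosen local channel on the $Y$-side. Here $Y'$ denotes the extra subsystem sitting on $Y$'s side, and the hypothesis that it is uncorrelated means the global state factorizes as $\rho_{XYY'}=\rho_{XY}\otimes\rho_{Y'}$.

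First I would observe that the partial trace $\Tr_{Y'}$ is a quantum channel acting only on the joint system $YY'$, hence a local operation on $Y$'s side. Applying it to $\rho_{XYY'}$ returns $\rho_{XY}$, so the assumed non-increase of $Q_{X:\,\cdot}$ under local operations on $Y$ gives $Q_{X:Y}(\rho_{XY})\le Q_{X:YY'}(\rho_{XYY'})$. For the reverse direction, I would use the channel $\Lambda:\omega_Y\mapsto \omega_Y\otimes\rho_{Y'}$ that adjoins a fixed ancilla prepared in the state $\rho_{Y'}$; this map is completely positive and trace preserving and acts only on $Y$'s side, so it too is a local operation on $Y$. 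Applied to $\rho_{XY}$ it produces precisely $\rho_{XY}\otimes\rho_{Y'}=\rho_{XYY'}$, whence monotonicity gives $Q_{X:YY'}(\rho_{XYY'})\le Q_{X:Y}(\rho_{XY})$. Combining the two inequalities yields equality, which is the assertion.

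The one place where a careful reader might pause — and thus the step I would write out explicitly — is the implicit scope of ``local operations on $Y$'' in the hypothesis: the argument needs this class to include operations that change the dimension of the $Y$-side, namely appending and discarding uncorrelated local ancillas, not merely dimension-preserving channels. I would handle this by stating up front that throughout we take ``local operation on $Y$'' to mean an arbitrary CPTP map acting on (an extension of) the $Y$ subsystem, which is the standard convention and is satisfied by all the quantifiers used in the paper (mutual information, classical correlation, relative entropy of discord, and negativity are each manifestly unaffected by tensoring in or tracing out an uncorrelated local ancilla). No structural property of $Q$ beyond this monotonicity is invoked, so the lemma is really just a bookkeeping consequence of the definition, and the proof is short.
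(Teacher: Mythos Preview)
Your proof is correct and is essentially the same argument as the paper's: both establish invariance by applying the monotonicity hypothesis once to the partial trace and once to the reverse operation of appending the uncorrelated ancilla, yielding the two opposite inequalities. Your version is slightly more explicit in flagging the dimension-changing scope of ``local operation on $Y$'', which the paper leaves implicit, but the underlying idea is identical.
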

\begin{proof}
Since the correlation measure is non-increasing under local operations on $Y$, tracing out an uncorrelated system on the side of $Y$ can only decrease the correlation.
However, if the correlation is strictly decreasing, then there is a reverse process, i.e., attaching the uncorrelated system back and, therefore, increasing the correlation.
Hence, the correlation $Q_{X:Y}$ has to be invariant under tracing-out of uncorrelated systems on $Y$.
In fact, this is true for all reversible operations.
\end{proof}

Our main theorem is proven as follows.

\setcounter{theorem}{0}
\begin{theorem}\label{TH_cons}
Consider a correlation measure $Q_{X:Y} \equiv \inf_{\sigma_{XY}\in \mathcal{S}} D(\rho_{XY},\sigma_{XY})$ satisfying the following properties:
\begin{enumerate}
\item[{\rm (i)}] $\mathcal{S}$ is closed under local operations $\Lambda_Y$ on $Y$;
\item[{\rm (ii)}] $D(\Lambda[\rho],\Lambda[\sigma]) \leq D(\rho, \sigma)$; and
\item[{\rm (iii)}] $D(\rho_0, \rho_1) \leq D(\rho_0, \rho_2) + D(\rho_2, \rho_1)$.
\end{enumerate}
If the evolution operator $\Lambda_{ABC}$ is decomposable into $\Lambda_{BC}\Lambda_{AC}$, then
\begin{equation}
	Q_{A:B} (t) \leq I_{AC:B}(0) + \sup_{\ket{\psi}} Q_{A:C},
\end{equation}
where $I_{AC:B}(0) = \inf_{\sigma_{AC} \otimes \sigma_B} D(\rho, \sigma_{AC} \otimes \sigma_B)$, $\rho$ is the initial tripartite state, and the supremum of $Q_{A:C}$ is taken over pure states of $AC$.
\end{theorem}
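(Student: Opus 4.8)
The plan is to use only properties (i)--(iii) together with Lemma~\ref{LM_pre} to track how correlations can accumulate across the two factors of the decomposition $\Lambda_{ABC}=\Lambda_{BC}\Lambda_{AC}$. Write $\rho'=\Lambda_{AC}[\rho]$ for the state after the first operation, so that the state at time $t$ is $\Lambda_{BC}[\rho']$. First I would pass to the coarser cut $A:BC$: tracing out $C$ is a local channel on the $BC$ side of that cut, so by (i) and (ii) it cannot increase the correlation, giving $Q_{A:B}(t)\le Q_{A:BC}(\Lambda_{BC}[\rho'])$; and since $\Lambda_{BC}$ is itself a local channel on the $BC$ side, monotonicity again gives $Q_{A:BC}(\Lambda_{BC}[\rho'])\le Q_{A:BC}(\rho')$. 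Hence $Q_{A:B}(t)\le Q_{A:BC}(\rho')$, with no $t$ dependence --- the quantitative incarnation of ``$C$ is exchanged only once.''

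Second, I would peel off the initial correlations using the triangle inequality (iii). Choose a product state $\omega_{AC}\otimes\omega_B$ across the cut $AC:B$ with $D(\rho,\omega_{AC}\otimes\omega_B)\le I_{AC:B}(0)+\epsilon$, and set $\mu=\Lambda_{AC}[\omega_{AC}\otimes\omega_B]=\Lambda_{AC}[\omega_{AC}]\otimes\omega_B$, which is still a product across $AC:B$. Taking the infimum of $D(\rho',\sigma)\le D(\rho',\mu)+D(\mu,\sigma)$ over $\sigma\in\mathcal S$ (for the cut $A:BC$) yields $Q_{A:BC}(\rho')\le D(\rho',\mu)+Q_{A:BC}(\mu)$. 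The first term is controlled by monotonicity, $D(\rho',\mu)=D(\Lambda_{AC}[\rho],\Lambda_{AC}[\omega_{AC}\otimes\omega_B])\le D(\rho,\omega_{AC}\otimes\omega_B)\le I_{AC:B}(0)+\epsilon$. In the second term $B$ appears as an uncorrelated factor on the $BC$ side, so Lemma~\ref{LM_pre} collapses it: $Q_{A:BC}(\mu)=Q_{A:C}(\Lambda_{AC}[\omega_{AC}])$. Assembling the two bounds and sending $\epsilon\to0$ leaves exactly $Q_{A:B}(t)\le I_{AC:B}(0)+Q_{A:C}(\Lambda_{AC}[\omega_{AC}])$.

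The last and hardest step is to replace $Q_{A:C}(\Lambda_{AC}[\omega_{AC}])$ by the pure-state capacity $\sup_{\ket{\psi}}Q_{A:C}$: one must argue that a single use of a channel on $AC$ cannot create more $A{:}C$ correlation than is contained in the most correlated \emph{pure} state of $AC$. Properties (i)--(iii) give monotonicity only on one side of the cut $A:C$, so one cannot simply purify $\Lambda_{AC}[\omega_{AC}]$ and absorb the purifying system into $A$, even though $d_A>d_C$ would leave dimensional room for it. I expect this to be closed by convexity of $Q_{A:C}$ --- which follows from joint convexity of $D$ together with convexity of $\mathcal S$, both available for the relative-entropy-based quantifiers treated in the next subsection --- so that $Q_{A:C}$ of any mixed state of $AC$, decomposed into pure states of $AC$, is bounded by the largest pure-state value. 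Finally, I would note that the relative entropy and the mutual information fail the literal triangle inequality (iii), so they need the separate argument of Lemma~\ref{LM_re}, in which the role played by (iii) in the second step above is taken over by a chain-rule / data-processing estimate for $S(\cdot\|\cdot)$.
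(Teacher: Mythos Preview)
Your first two paragraphs reproduce the paper's argument essentially step for step: reduce to the cut $A{:}BC$, peel off $\Lambda_{BC}$, insert the product point $\Lambda_{AC}[\omega_{AC}]\otimes\omega_B$ via the triangle inequality, control the first distance by monotonicity (ii), and collapse the uncorrelated $B$ factor with Lemma~\ref{LM_pre}. Your $\epsilon$-approximation for the infimum defining $I_{AC:B}(0)$ is in fact slightly more careful than the paper, which simply takes the closest product state.

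The only substantive divergence is in the last step. You flag $Q_{A:C}(\Lambda_{AC}[\omega_{AC}])\le\sup_{\ket{\psi}}Q_{A:C}$ as ``the hardest step'' and propose to close it with convexity of $Q_{A:C}$, which would be an \emph{extra} hypothesis beyond (i)--(iii). The paper does not do this: it invokes a result of Streltsov \emph{et al.}\ \cite{streltsov2012general} stating that any correlation measure non-increasing under local operations on at least one subsystem is already maximised on pure states. That result uses only the one-sided monotonicity you already have from (i)+(ii): one purifies the mixed state with the ancilla attached on the $C$ side (the side where monotonicity holds), uses monotonicity under the partial trace of the ancilla, and then Lemma~\ref{LM_pre} (invariance under reversible local operations, hence local isometries) to identify the resulting pure state with one on $AC$. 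So your worry that ``monotonicity only on one side'' blocks the purification is misplaced---purify on the monotone side, not on $A$---and no convexity is needed. Your remark about relative entropy needing the separate Lemma~\ref{LM_re} is exactly how the paper handles it.
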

\begin{proof}
Properties (i) and (ii), and the definition of $Q_{X:Y}$ as the shortest distance, imply that $Q_{X:Y}$ is nonincreasing under local operations on $Y$. 
Accordingly, the property proven in Lemma~\ref{LM_pre} applies.
We have
\begin{eqnarray}
	Q_{A:B} (t)
	&\leq& Q_{A:BC} \left(\Lambda_{BC} \Lambda_{AC} [\rho] \right) \label{APP_TH_S1} \\
	&\le& Q_{A:BC} \left( \Lambda_{AC} [\rho] \right) \label{APP_TH_S2} \\
	& \le & D \left(\Lambda_{AC} [\rho], \mu \right) \label{APP_TH_S3} \\
	& \leq & D \left( \Lambda_{AC} [\rho], \Lambda_{AC} [\sigma^0_{AC}]  \otimes \sigma^0_B \right) \nonumber \\
	& + & D\left( \Lambda_{AC} [\sigma^0_{AC}]  \otimes \sigma^0_{B}, \mu \right) \label{APP_TH_TRIAN}\\
	& \le &  D(\rho, \sigma^0_{AC} \otimes \sigma^0_B)	\nonumber \\
	&+ & D\left( \Lambda_{AC} [\sigma^0_{AC}]  \otimes \sigma^0_{B}, \mu \right) \label{APP_TH_S4} \\
	&=& I_{AC:B}(0) + Q_{A:BC} (\Lambda_{AC} [\sigma^0_{AC}]  \otimes \sigma^0_{B} ) \label{APP_TH_S5} \\
	&=& I_{AC:B}(0) + Q_{A:C} (\Lambda_{AC} [\sigma^0_{AC}] ) \label{APP_TH_S6} \\
	&\leq& I_{AC:B}(0) +\sup_{\ket{\psi}} Q_{A:C}, \label{APP_TH_S7}
	\end{eqnarray}
where the steps are justified as follows.
In line (\ref{APP_TH_S1}) we have used the fact that $Q_{X:Y}$ is nonincreasing under local operations on $Y$ (tracing out $C$).
Line (\ref{APP_TH_S2}) follows, as $Q_{A:BC}$ is nonincreasing under local operation $\Lambda_{BC}$.
The next line, (\ref{APP_TH_S3}), utilizes the definition of $Q_{A:BC}$ as the shortest distance to the set of states $\mu \in \mathcal{S}_{A:BC}$.
The inequality of (\ref{APP_TH_TRIAN}) follows from the triangle inequality (iii).
Note that the first distance in (\ref{APP_TH_TRIAN}) does not depend on $\mu$ and at this point one can choose any $\sigma^0_{AC}$ and $\sigma^0_B$.
The inequality (\ref{APP_TH_S4}) invokes property (ii).
In (\ref{APP_TH_S5}), we have chosen $\sigma^0_{AC} \otimes \sigma^0_B$ as the closest product state to $\rho$ and $\mu$ as a state in $\mathcal{S}_{A:BC}$ closest to $\Lambda_{AC} [\sigma^0_{AC}]  \otimes \sigma^0_{B}$.
Line (\ref{APP_TH_S6}) uses the invariance of $Q_{A:BC}$ under tracing-out of the uncorrelated system $\sigma^0_B$.
For the final inequality, we note that a correlation measure that is nonincreasing under local operations on at least one side must be maximal on pure states~\cite{streltsov2012general}.
\end{proof}

\setcounter{theorem}{1}
\begin{lemma}\label{LM_re}
The conclusion in Theorem \ref{TH_cons} still follows for the relative entropy as a distance measure.
\end{lemma}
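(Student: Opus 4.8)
The plan is to rerun the proof of Theorem~\ref{TH_cons} verbatim up to line~(\ref{APP_TH_TRIAN}), where the triangle inequality (iii) is invoked, and to replace that single step by the chain rule for the relative entropy. First note that monotonicity of the relative entropy under CPTP maps supplies property~(ii), and since the relevant sets $\mathcal{S}$ (product states for the mutual information, separable states for the relative entropy of entanglement, states classical on the measured party for the relative entropy of discord) obey~(i), the measure $Q_{X:Y}$ is non-increasing under local operations on $Y$. Hence lines~(\ref{APP_TH_S1})--(\ref{APP_TH_S2}) carry over unchanged, and it suffices to bound $Q_{A:BC}(\Lambda_{AC}[\rho])$. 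I will also use, as there, that for the relative-entropy distance $I_{AC:B}(0)=\inf_{\sigma_{AC}\otimes\sigma_B}S(\rho\,\|\,\sigma_{AC}\otimes\sigma_B)$ is attained at $\sigma_{AC}=\rho_{AC}$, $\sigma_B=\rho_B$, so it equals the quantum mutual information $I_{AC:B}(\rho)$.

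For the core estimate, set $\tau=\Lambda_{AC}[\rho]$ and note that $\tau_B=\rho_B$ and $\tau_{AC}=\Lambda_{AC}[\rho_{AC}]$. Let $\nu_{AC}\in\mathcal{S}_{A:C}$ attain (or approach) the infimum defining $Q_{A:C}(\tau_{AC})$. The key point is that $\nu_{AC}\otimes\rho_B\in\mathcal{S}_{A:BC}$: this is immediate for product and for separable states, and for the relative entropy of discord it holds once $\rho_B$ is written in its eigenbasis, since then $\nu_{AC}\otimes\rho_B$ is diagonal in a product basis of $BC$ whenever $\nu_{AC}$ is classical on $C$ (this choice of basis is harmless, as $\rho_B=\tau_B$ is fixed regardless). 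Therefore $Q_{A:BC}(\tau)\le S(\tau\,\|\,\nu_{AC}\otimes\rho_B)$. Expanding $\log(\nu_{AC}\otimes\rho_B)=\log\nu_{AC}\otimes\identity_B+\identity_{AC}\otimes\log\rho_B$ and using $\tau_B=\rho_B$ yields the relative-entropy chain rule $S(\tau\,\|\,\nu_{AC}\otimes\rho_B)=S(\tau_{AC}\,\|\,\nu_{AC})+I_{AC:B}(\tau)=Q_{A:C}(\tau_{AC})+I_{AC:B}(\tau)$, the last equality by optimality of $\nu_{AC}$.

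To finish, I would close exactly as in Theorem~\ref{TH_cons}: $Q_{A:C}(\tau_{AC})\le\sup_{\ket{\psi}}Q_{A:C}$ because a correlation measure non-increasing under local operations on one party is maximal on pure states~\cite{streltsov2012general}, while $I_{AC:B}(\tau)=I_{AC:B}(\Lambda_{AC}[\rho])\le I_{AC:B}(\rho)=I_{AC:B}(0)$ by monotonicity of the mutual information under the local channel $\Lambda_{AC}$ acting on the $AC$ side of the $AC\!:\!B$ cut. Chaining these with lines~(\ref{APP_TH_S1})--(\ref{APP_TH_S2}) gives $Q_{A:B}(t)\le I_{AC:B}(0)+\sup_{\ket{\psi}}Q_{A:C}$, which is the claimed bound.

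The only genuinely new ingredient over Theorem~\ref{TH_cons} is the chain-rule identity, which here takes over the role of the triangle inequality; I expect the main thing to get right is the membership $\nu_{AC}\otimes\rho_B\in\mathcal{S}_{A:BC}$ for each concrete measure, since this is the one place where the special structure of the relative-entropy-based quantifiers (rather than a genuine metric distance) is exploited — but in every case of interest here it is routine, as sketched above.
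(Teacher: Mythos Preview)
Your proof is correct and follows essentially the same route as the paper's: both replace the triangle-inequality step by the relative-entropy chain rule $S(\tau\,\|\,\nu_{AC}\otimes\tau_B)=I_{AC:B}(\tau)+S(\tau_{AC}\,\|\,\nu_{AC})$, pick $\nu_{AC}\in\mathcal{S}_{A:C}$ optimal for $\tau_{AC}$, and then invoke monotonicity of $I_{AC:B}$ under $\Lambda_{AC}$ together with maximality of $Q_{A:C}$ on pure states. Your explicit verification that $\nu_{AC}\otimes\rho_B\in\mathcal{S}_{A:BC}$ is a point the paper leaves implicit, but otherwise the arguments coincide.
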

\begin{proof}
Let us begin with an identity, 
\begin{eqnarray}
S(\rho||\sigma_X\otimes \sigma_Y)&=&\mbox{tr}(\rho \log{\rho}-\rho\log{\sigma_X \otimes \sigma_Y})  \nonumber \\
&=& \mbox{tr}(\rho \log{\rho}-\rho \log{\rho_X \otimes \rho_Y})  \nonumber \\
&&+\mbox{tr}(\rho \log{\rho_X \otimes \rho_Y} - \rho \log{\sigma_X \otimes \sigma_Y}) \nonumber \\
&=&S(\rho||\rho_X\otimes \rho_Y)+S(\rho_X||\sigma_X) \nonumber \\
&&+S(\rho_Y||\sigma_Y), \label{GG0}
\end{eqnarray}
where $\rho_X$ and $\rho_Y$ are the marginals of $\rho$ and we have used, for example, relation $\mbox{tr}(\rho \log{\sigma_X \otimes \sigma_Y}) = \mbox{tr}(\rho_X \log{\sigma_X}) + \mbox{tr}(\rho_Y \log{\sigma_Y})$.

Although relative entropy satisfies (ii) \cite{uhlmann1977relative}, it is well known not to follow (iii). Therefore, starting from (\ref{APP_TH_S2}), we have
\begin{eqnarray}
&&Q_{A:BC} \left( \Lambda_{AC} [\rho] \right) \nonumber \\
&=&\inf_{\mu \in \mathcal{S}_{A:BC}} S \left(\Lambda_{AC}[\rho] || \mu \right) \label{GG01}\\
&\le& S(\Lambda_{AC}[\rho] || \mu_{AC}\otimes \mu_B) \label{GG1}\\
&=&  S(\Lambda_{AC}[\rho] || \rho_{AC}^{\prime}\otimes \rho_B^{\prime}) \nonumber \\
&&+S(\rho_{AC}^{\prime} || \mu_{AC})+S(\rho_{B}^{\prime} || \mu_{B}) \label{GG2} \\
&=&I_{AC:B}(\Lambda_{AC}[\rho] )+Q_{A:C}(\rho_{AC}^{\prime}) \label{GG3} \\
&\le&I_{AC:B}(0)+\sup_{\ket{\psi}} Q_{A:C},
\end{eqnarray}
where $\rho_{AC}^{\prime}$ and $\rho_B^{\prime}$ are marginals of $\Lambda_{AC}[\rho]$. 
The steps above are justified as follows. 
Line (\ref{GG1}) follows for any state of the form $\mu_{AC}\otimes \mu_B\in \mathcal{S}_{A:BC}$.
We have used identity (\ref{GG0}) in line (\ref{GG2}). The equality (\ref{GG3}) uses the definition of mutual information as the relative entropy from a state to its marginals \cite{modi2010}. We have also chosen $\mu_{AC}$ as a state in $\mathcal{S}_{A:C}$ closest to $\rho_{AC}^{\prime}$ and $\mu_B=\rho_B^{\prime}$. The last line follows as mutual information is non-increasing under local operation $\Lambda_{AC}$ and the correlation $Q_{A:C}$ achieves the supremum on pure states. 
\end{proof}

\section{$\mbox{Proof of Eq. (2) for correlations only}$\\  $\mbox{monotonic under local operations}\:\Lambda_{BC}$}
\label{APP_NOND}

\setcounter{theorem}{1}
\begin{theorem}\label{TH_main}
	Suppose the initial state has the form $\rho = \rho_{AC} \otimes \rho_B$.
	If the evolution operator is decomposable into $\Lambda_{BC}\Lambda_{AC}$, then 
	\begin{equation}
	Q_{A:B}(t)\le \sup_{\ket{\psi}} \:Q_{A:C}
	\end{equation}
	 for all correlation measures, $Q$, non-increasing under local operations $\Lambda_{BC}$.
\end{theorem}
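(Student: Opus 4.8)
The plan is to mimic the chain of inequalities used in the proof of Theorem~\ref{TH_cons}, but with the simplifications afforded by the product initial state and by dropping any reference to the distance-based structure. Since $Q$ is now an arbitrary correlation measure non-increasing under local operations on the $BC$ side, I cannot invoke the triangle inequality or the infimum-over-$\mathcal{S}$ representation, so the argument must be purely monotonicity-based.

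First I would write the state of $AB$ at time $t$ as the marginal of $\Lambda_{BC}\Lambda_{AC}[\rho_{AC}\otimes\rho_B]$, obtained by tracing out $C$. By Lemma~\ref{LM_pre}, a measure non-increasing under local operations on $Y=BC$ is invariant under tracing out uncorrelated systems on that side; more to the point, monotonicity alone gives $Q_{A:B}(t)\le Q_{A:BC}\bigl(\Lambda_{BC}\Lambda_{AC}[\rho_{AC}\otimes\rho_B]\bigr)$, since tracing out $C$ is a local operation on the $BC$ side. Next, $\Lambda_{BC}$ is itself a local operation on the $BC$ side, so $Q_{A:BC}\bigl(\Lambda_{BC}\Lambda_{AC}[\rho_{AC}\otimes\rho_B]\bigr)\le Q_{A:BC}\bigl(\Lambda_{AC}[\rho_{AC}\otimes\rho_B]\bigr)$. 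Now the key point: $\Lambda_{AC}$ acts trivially on $B$, and $B$ is initially uncorrelated with $AC$, so $\Lambda_{AC}[\rho_{AC}\otimes\rho_B]=\Lambda_{AC}[\rho_{AC}]\otimes\rho_B$. Hence $B$ remains an uncorrelated subsystem on the $BC$ side, and by Lemma~\ref{LM_pre} (invariance under tracing out uncorrelated systems on $Y$) we get $Q_{A:BC}\bigl(\Lambda_{AC}[\rho_{AC}]\otimes\rho_B\bigr)=Q_{A:C}\bigl(\Lambda_{AC}[\rho_{AC}]\bigr)$.

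Finally I would bound $Q_{A:C}\bigl(\Lambda_{AC}[\rho_{AC}]\bigr)$ by its supremum over all states of $AC$, and then use the fact that a correlation measure non-increasing under local operations on at least one side is maximized on pure states~\cite{streltsov2012general}, so that this supremum equals $\sup_{\ket\psi}Q_{A:C}$. Chaining these steps yields $Q_{A:B}(t)\le\sup_{\ket\psi}Q_{A:C}$, which is the claim; the ``reverse order'' case $\Lambda_{AC}\Lambda_{BC}$ is handled symmetrically, noting that $Q$ is symmetric or, if not, that the analogous bound with $Q_{C:A}$ holds and the pure-state supremum coincides.

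The main obstacle is a conceptual one: the proof of Theorem~\ref{TH_cons} leaned on the triangle inequality precisely to absorb the non-product part of a general initial state into the term $I_{AC:B}(0)$; here that crutch is gone, so the whole argument hinges on the initial state being exactly of the form $\rho_{AC}\otimes\rho_B$ so that the ``$B$ is uncorrelated on the $BC$ side'' property survives the action of $\Lambda_{AC}$. I would make sure to state clearly that this is where the hypothesis is used and that Lemma~\ref{LM_pre} is applied to the bipartition $A:(BC)$ with the uncorrelated system being $B$ itself, not $C$ — a point that is easy to get backwards. A minor subtlety to flag is that Lemma~\ref{LM_pre} as stated concerns invariance under \emph{tracing out}, whereas here I am also implicitly using invariance under \emph{appending} an uncorrelated system (to go from $Q_{A:C}$ back up to $Q_{A:BC}$); the lemma's proof already covers this, since it shows invariance under any reversible local operation, but I would remark on it explicitly for clarity.
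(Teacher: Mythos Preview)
Your proposal is correct and follows essentially the same chain of inequalities as the paper's own proof: $Q_{A:B}(t)\le Q_{A:BC}(t)\le Q_{A:BC}(\Lambda_{AC}[\rho])=Q_{A:C}(\Lambda_{AC}[\rho])\le\sup_{\ket{\psi}}Q_{A:C}$, with the same justifications (tracing out $C$ is local on $BC$, monotonicity under $\Lambda_{BC}$, Lemma~\ref{LM_pre} applied to the uncorrelated $B$, and maximality on pure states). Your additional remarks on the reverse-order case and on the appending/tracing-out direction of Lemma~\ref{LM_pre} are sound elaborations that the paper omits.
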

\begin{proof}
For initial states of the form $\rho_{AC}\otimes \rho_B$ we have the following chain of arguments
\begin{eqnarray}
Q_{A:B}(t) & \le & Q_{A:BC}(t)\le Q_{A:BC}(\Lambda_{AC}[\rho]) \nonumber \\
&=& Q_{A:C}(\Lambda_{AC}[\rho]) \le \sup_{\ket{\psi}}  Q_{A:C},
\end{eqnarray}
where the steps are justified as follows.
Since the action of tracing out (the, in general, correlated) system $C$ is a local operation on $BC$, we obtain the first inequality.
The second inequality follows as the correlation is non-increasing under $\Lambda_{BC}$.
As we start with the initial state $\rho_{AC}\otimes \rho_B$ and $\Lambda_{AC}$ does not act on $B$, system $B$ stays uncorrelated in $\Lambda_{AC}[\rho]$.
Using Lemma \ref{LM_pre}, we have the equality.
Finally, the correlation $Q_{A:C}$ is again maximal on pure states.
\end{proof}
\vspace{0.5cm}

\section{$\mbox{Proof of separability via dipole-dipole}$\\ $\mbox{coupling for particular initial states}$}
\label{APP_N}

Let us define $\hat{\xi} = (\hat a +\hat b )/\sqrt{2}$. 
The dipole-dipole Hamiltonian, Eq. (6), is reformulated as $H^{\prime}=\sqrt{2}g(\hat \xi +\hat \xi^{\dagger})\hat \sigma_x$, where $\hat \sigma_x=\hat \sigma_+ +\hat \sigma_-$ and $[\hat \xi, \hat \xi^{\dagger}]=\openone$. 
The unitary evolution operator is given by
\begin{eqnarray}\label{EQ_hddunitary}
\hat U_t&=&e^{-iH^{\prime}t} \\
&=&\frac{1}{2} [(\openone-\hat \sigma_x)e^{i\sqrt{2}gt(\hat \xi + \hat \xi^{\dagger})}+(\openone+\hat \sigma_x)e^{-i\sqrt{2}gt(\hat \xi + \hat \xi^{\dagger})}]\nonumber \\
&=&\frac{1}{2} [(\openone-\hat \sigma_x)\hat D_a(\alpha)\hat D_b(\alpha) +(\openone+\hat \sigma_x)\hat D_a(-\alpha)\hat D_b(-\alpha)], \nonumber
\end{eqnarray}
where $\alpha=igt$ and, e.g., $\hat D_a (\alpha)=\exp(\alpha\hat a^{\dagger}-\alpha^{\ast}\hat a)$. 
Given an initial state $|mn0\rangle$, the state at time $t$ reads
\begin{eqnarray}
| \psi_t \rangle & = & \frac{1}{4} [ (d^{(mn)}_{++}|D^{(m)}_+,D^{(n)}_+\rangle + d^{(mn)}_{--}|D^{(m)}_-,D^{(n)}_-\rangle )|0\rangle \nonumber \\
& - & (d^{(mn)}_{+-}|D^{(m)}_+,D^{(n)}_-\rangle+d^{(mn)}_{-+}|D^{(m)}_-,D^{(n)}_+\rangle)|1\rangle], \nonumber
\end{eqnarray}
where
\begin{eqnarray}
d^{(mn)}_{\pm\pm} & = & 2\sqrt{[1\pm e^{-2|\alpha|^2} L_m(4|\alpha|^2)][1\pm e^{-2|\alpha|^2} L_n(4|\alpha|^2)]},\nonumber \\
|D^{(n)}_{\pm}\rangle & = & \frac{1}{\sqrt{d^{(nn)}_{\pm\pm}}}[\hat{D}(\alpha)\pm\hat{D}(-\alpha)]|n\rangle . \nonumber
\end{eqnarray}
Note that $\langle D^{(n)}_+|D^{(n)}_-\rangle=0$ and $\langle D^{(n)}_{\pm}|D^{(n)}_{\pm}\rangle=1$. $L_n(|\alpha|^2)$ is the Laguerre polynomial, which comes from the relation
$\langle n|\hat{D}(\alpha)|n\rangle=e^{-|\alpha|^2/2}L_n(|\alpha|^2)$.
After tracing-out of the atomic mode $C$, the state of the fields is effectively given by a two-qubit state,
\begin{eqnarray}
\frac{1}{16}
\begin{pmatrix}
(d^{(mn)}_{++})^2 & 0& 0& d^{(mn)}_{++}d^{(mn)}_{--} \\
0 & (d^{(mn)}_{+-})^2 & d^{(mn)}_{++}d^{(mn)}_{--} & 0 \\
0 & d^{(mn)}_{++}d^{(mn)}_{--} & (d^{(mn)}_{-+})^2 & 0 \\
d^{(mn)}_{++}d^{(mn)}_{--} & 0 & 0 & (d^{(mn)}_{--})^2 \nonumber
\end{pmatrix},\\
\end{eqnarray}
which is positive under partial transposition and, hence, separable \cite{peres1996,horodecki1996m}. The same result follows for initial state $\ket{mn1}$.

\end{document}